\newtheorem{theorem}{Theorem}
\title{Adaptive Collaborative Similarity Learning for Unsupervised \\ Multi-view Feature Selection}
\author{
Xiao Dong$^1$,
Lei Zhu$^1$\thanks{Corresponding Author},
Xuemeng Song$^2$,
Jingjing Li$^3$,
Zhiyong Cheng$^4$
\\
$^1$ School of Information Science and Engineering, Shandong Normal University, China \\
$^2$ School of Computer Science and Technology, Shandong University, China \\
$^3$ University of Electronic Science and Technology of China, China \\
$^4$ School of Computing, National University of Singapore, Singapore\\
leizhu0608@gmail.com
}
\begin{document}

\maketitle

\begin{abstract}
In this paper, we investigate the research problem of unsupervised multi-view feature selection. Conventional solutions first simply combine multiple pre-constructed view-specific similarity structures into a collaborative similarity structure, and then perform the subsequent feature selection. These two processes are separate and independent. The collaborative similarity structure remains fixed during feature selection. Further, the simple undirected view combination may adversely reduce the reliability of the ultimate similarity structure for feature selection, as the view-specific similarity structures generally involve noises and outlying entries. To alleviate these problems, we propose an adaptive collaborative similarity learning (ACSL) for multi-view feature selection. We propose to dynamically learn the collaborative similarity structure, and further integrate it with the ultimate feature selection into a unified framework. Moreover, a reasonable rank constraint is devised to adaptively learn an ideal collaborative similarity structure with proper similarity combination weights and desirable neighbor assignment, both of which could positively facilitate the feature selection. An effective solution guaranteed with the proved convergence is derived to iteratively tackle the formulated optimization problem. Experiments demonstrate the superiority of the proposed approach.
\end{abstract}

\section{Introduction}
\label{sec:intro}
With the advent of big data, multi-view features with high dimensions are widely employed to represent the complex data in various research fields, such as multimedia computing, machine learning and data mining \cite{liuananone,liuanantwo,zhutkde,DBLP:journals/tcyb/ZhuSJZX15,CHENG201613,chengmmsys}. On the one hand, with multi-view features, the data could be characterized more precisely and comprehensively from different perspectives. On the other hand, high-dimensional multi-view features will inevitably generate expensive computation cost and cause massive storage cost. Moreover, they may contain adverse noises, outlying entries, irrelevant and correlated features, which may be detrimental to the subsequent learning process \cite{zhuijcai,7605530,7984879}. Unsupervised multi-view feature selection \cite{WANG2016691,DBLP:journals/expert/LiL17} is devised to alleviate the problem. It selects a compact subset of informative features from the original features by dropping irrelevant and redundant features with advanced unsupervised learning. Due to the independence on semantic labels, high computing efficiency and well interpretation capability, unsupervised multi-view feature selection has received considerable attention in literature. It becomes a prerequisite component in various machine learning models \cite{DBLP:journals/kais/LiLL17}.

The key problem of multi-view feature selection is how to effectively exploit the diversity and consistency of multi-view features to collaboratively identify the feature dimensions, which could retain the key characteristics of the original features. Existing approaches can be categorized into two major families. The first kind of methods first concatenates multi-view features into a vector and then directly imports it into the conventional single-view feature selection model. The candidate features are generally ranked based on spectral graph theory. Typical methods of this kind include Laplacian Score (LapScor) \cite{DBLP:conf/nips/HeCN05}, spectral feature selection (SPEC) \cite{DBLP:conf/icml/ZhaoL07} and minimum redundancy spectral feature selection (MRSF) \cite{DBLP:conf/aaai/ZhaoWL10}. Commonly, the pipeline of these methods follows two separate processes: 1) Similarity structure is constructed with fixed graph parameters to describe the geometric structure of data. 2) Sparsity and manifold regularization are employed together to identify the most salient features. Although these methods are reported to achieve certain success, they treat features from different views independently and unfortunately neglect the important view correlations.

Another family of methods considers view correlation when performing feature selection. Representative works include adaptive multi-view feature selection (AMFS) \cite{WANG2016691}, multi-view feature selection (MVFS) \cite{DBLP:conf/sdm/GaoHLT13} and adaptive unsupervised multi-view feature selection (AUMFS) \cite{Feng2013}. These methods first construct multiple view-specific similarity structures\footnote{In this paper, view-specific similarity structure is constructed with the corresponding view-specific feature.} and then perform the subsequent feature selection based on the collaborative (combined) similarity structure. These two processes are separate and independent. The collaborative similarity structure remains fixed during feature selection. The latently involved data noises and outlying entries in the view-specific similarity structures will adversely reduce the reliability of the ultimate collaborative similarity structure for feature selection. Furthermore, conventional approaches generally employ $k$-nearest neighbors assignment to construct the view-specific similarity structures and the simple weighted combination for ultimate similarity structure generation. This strategy can hardly achieve the ideal state for clustering that the number of connected components in the ultimate similarity structure is equal to the number of clusters \cite{DBLP:conf/kdd/NieWH14}. Thus, suboptimal performance may be caused under such circumstance.

In this paper, we introduce an adaptive collaborative similarity learning (ACSL) for unsupervised multi-view feature selection. The main contributions of this paper can be summarized as follows:
\begin{itemize}
  \item  Different from existing solutions, we integrate the collaborative similarity structure learning and multi-view feature selection into a unified framework. The collaborative similarity structure and similarity combination weights could be learned adaptively by considering the ultimate feature selection performance. Simultaneously, the feature selection can preserve the dynamically adjusted similarity structure.\vspace{1mm}
  \item We impose a reasonable rank constraint to adaptively learn an ideal collaborative similarity structure with proper neighbor assignment which could positively facilitate the ultimate feature selection. An effective alternate optimization approach guaranteed with convergence is derived to iteratively solve the formulated optimization problem.\vspace{1mm}
\end{itemize}

\section{Related Work}
\label{sec:rel}
One kind of unsupervised multi-view feature selection methods directly imports the concatenated features in multiple views into the single-view feature selection model. In \cite{DBLP:conf/nips/HeCN05}, Laplacian score (LapScor) is employed to measure the capability of each feature dimension on preserving sample similarity. \cite{DBLP:conf/icml/ZhaoL07} proposes a general spectral theory based learning framework to unify the unsupervised and supervised feature selection. \cite{DBLP:conf/aaai/ZhaoWL10} adopts an embedding model to  handle feature redundancy in the spectral feature selection. These methods generally rank the candidate feature dimensions with various graphs which characterize the manifold structure. They treat features from different views independently and unfortunately ignore the important correlation of different feature views. Another kind of methods directly tackles the multi-view feature selection. They consider view correlations when performing feature selection. Adaptive multi-view feature selection (AMFS) \cite{WANG2016691} is an unsupervised feature selection approach which is developed for human motion retrieval. It describes the local geometric structure of data in each view with local descriptor and performs the feature selection in a general trace ratio optimization. In this method, the feature dimensions are determined with trace ratio criteria. Adaptive unsupervised multi-view feature selection (AUMFS) \cite{Feng2013} addresses the feature selection problem for visual concept recognition. It employs $l_{2,1}$ norm \cite{DBLP:conf/nips/NieHCD10} based sparse regression model to automatically identify discriminative features. In AUMFS, data cluster structure, data similarity and the correlations of different views are considered for feature selection. Multi-view feature selection (MVFS) \cite{DBLP:conf/sdm/GaoHLT13} investigates the feature selection for multi-view data in social media. A learning framework is devised to exploit the relations of views and help each view select relevant features.


\section{The Proposed Methodology}
\label{sec:prop}

\subsection{Notations and Definitions}
Throughout the paper, all the matrices are written in uppercase with boldface. For a matrix $\textbf{M}\in \mathcal{R}^{N\times d}$, its $i_{th}$ row is denoted by $\textbf{M}_{i'}\in \mathcal{R}^{N\times 1}$, its $j_{th}$ column is denoted by $\textbf{M}_{j}\in \mathcal{R}^{d\times 1}$. The element in the $i_{th}$ row and $j_{th}$ column is represented as $M_{ij}$. The trace of the matrix $\textbf{M}$ is denoted as $Tr(\textbf{M})$. The transpose of matrix $\textbf{M}$ is denoted as $\textbf{M}^{\texttt{T}}$. The $l_{2,1}$ norm of the matrix $\textbf{M}$ is denoted as $||\textbf{M}||_{2,1}$, which is calculated by $\sum_{i=1}^{N} \sqrt{\sum_{j=1}^{d} M_{i,j}^2}$.
The Frobenius norm of $\textbf{M}$ is denoted by $||\textbf{M}||_F=\sqrt{\sum_{i=1}^N \sum_{j=1}^d M_{i,j}^2}$. $\textbf{1}$ denotes a column vector whose all elements are one. $\textbf{I}_{k\times k}$ denotes $k\times k$ identify matrix. 

The feature matrix of data in the $v_{th}$ view is denoted as $\textbf{X}^v=[\textbf{x}_1^v, \textbf{x}_2^v, ..., \textbf{x}_N^v]^\texttt{T}\in \mathcal{R}^{N \times d_v}$, $\textbf{x}_1^v\in \mathcal{R}^{d_v \times 1}$, $d_v$ is the dimension of feature in the $v_{th}$ view, $N$ is the number of data samples. We pack the feature matrices in $V$ views $\{\textbf{X}^v\}_{v=1}^V$ and the overall feature matrix of data can be represented as $\textbf{X}=[\textbf{X}^1,\textbf{X}^2,...,\textbf{X}^V]\in \mathcal{R}^{N \times d}$, $\sum_{v=1}^V d_v = d$. The objective of unsupervised multi-view feature selection is to identify $l$ most valuable features with only $\textbf{X}$.

\subsection{Formulation}
The importance of feature dimensions are primarily determined by measuring the their capabilities on preserving the similarity structures in multiple views. In this paper, we develop a unified learning framework to learn an adaptive collaborative similarity structure with automatic neighbor assignment for multi-view feature selection. In our model, the neighbors in the collaborative similarity structure could be adaptively assigned by considering the feature selection performance, and simultaneously the feature selection could preserve the dynamically constructed collaborative similarity structure.
Given $V$ similarity structures constructed in multiple views $\{\textbf{S}^v\}_{v=1}^V$, $V$ is the number of views, we can automatically learn a collaborative similarity structure $\textbf{S}$ by combining $\{\textbf{S}^v\}_{v=1}^V$ with $V$ weights.
\begin{equation}
\small
\begin{aligned}
\label{eq:sh}
&\arg \min_{\textbf{S},\textbf{W}} \ \sum_{j=1}^N ||\textbf{S}_j- \sum_{v=1}^V w_j^v\textbf{S}_j^v||_F^2\\
& s.t. \ \forall j \ \textbf{1}_N^\texttt{T}\textbf{S}_j=1,  \textbf{S}_j \geq \textbf{0},  \textbf{W}_j^\texttt{T}\textbf{1}_V=1
\end{aligned}
\end{equation}
\noindent where $\textbf{S}_j\in \mathcal{R}^{N\times 1}$ characterizes the similarities between any data points with $j$, it should be subjected to the constraint that $\textbf{1}^\texttt{T}\textbf{S}_j=1, \textbf{S}_j \geq \textbf{0}$, $\textbf{W}_j=[w_j^1,w_j^2,...,w_j^V]^\texttt{T}\in \mathcal{R}^{V\times 1}$ is comprised of view weights for the $j_{th}$ column of similarities, it is constrained with $\textbf{W}_j^\texttt{T}\textbf{1}_V=1$, $\textbf{W}=[\textbf{W}_1, \textbf{W}_2,...,\textbf{W}_N]\in \mathcal{R}^{V\times N}$ is view weight matrix for all columns in the similarity structures. As indicated in recent work \cite{DBLP:conf/kdd/NieWH14}, a theoretically ideal similarity structure for clustering should have the property that the number of connected components is equal to the number of clusters. The similarity structure with such neighbor assignment could benefit the subsequent feature selection. Unfortunately, the similarity structure learned from Eq.(\ref{eq:sh}) does not have such desirable property.

To tackle the problem, in this paper, we impose a reasonable rank constraint on the Laplacian matrix of the collaborative similarity structure to enable it to have such property. Our idea is motivated by the following spectral graph theory.
\begin{theorem}
\label{theorem1}
 If the similarity structure $\textbf{S}$ are nonnegative, the multiplicity of eigen-values $0$ corresponding to its Laplacain matrix is equal to the number of components of $\textbf{S}$. \cite{alavi1991graph}
\end{theorem}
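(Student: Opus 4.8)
The plan is to prove this classical result about graph Laplacians by analyzing the quadratic form associated with $\textbf{L}$, the Laplacian matrix of $\textbf{S}$. Recall that $\textbf{L} = \textbf{D} - \textbf{S}$, where $\textbf{D}$ is the diagonal degree matrix with $D_{ii} = \sum_{j} S_{ij}$. (For the spectral decomposition to be clean we may symmetrize $\textbf{S}$ and use $(\textbf{S}+\textbf{S}^\texttt{T})/2$; since the statement concerns the number of connected components, I will work with the undirected graph induced by the nonzero entries.) The central identity I would establish first is that for any vector $\textbf{f}\in\mathcal{R}^{N\times 1}$,
\begin{equation}
\textbf{f}^\texttt{T}\textbf{L}\textbf{f}=\frac{1}{2}\sum_{i=1}^N\sum_{j=1}^N S_{ij}(f_i-f_j)^2.
\end{equation}
This is the workhorse of the whole argument: it shows immediately that $\textbf{L}$ is positive semidefinite (every term is nonnegative because $\textbf{S}$ is nonnegative), so all eigenvalues are real and $\geq 0$, and the multiplicity of the eigenvalue $0$ equals the dimension of the null space of $\textbf{L}$.

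Next I would characterize that null space. From the identity, $\textbf{f}^\texttt{T}\textbf{L}\textbf{f}=0$ forces $S_{ij}(f_i-f_j)^2=0$ for every pair $(i,j)$; hence $f_i=f_j$ whenever $S_{ij}>0$, i.e. whenever $i$ and $j$ are adjacent. Propagating this equality along paths, $\textbf{f}$ must be constant on each connected component of the graph. The plan is then to exhibit, for a graph with $c$ connected components partitioning the vertices into sets $C_1,\dots,C_c$, the indicator vectors $\textbf{1}_{C_1},\dots,\textbf{1}_{C_c}$ (each equal to $1$ on its component and $0$ elsewhere). I would verify that each lies in the null space and that they are linearly independent because their supports are disjoint and nonempty.

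Finally I would argue these $c$ vectors span the null space: any $\textbf{f}$ with $\textbf{L}\textbf{f}=\textbf{0}$ is constant on each component, so it is a linear combination of the indicators, with coefficients equal to the component-wise constant values. This gives $\dim\ker(\textbf{L})=c$, which is exactly the multiplicity of eigenvalue $0$, completing the correspondence with the number of connected components of $\textbf{S}$.

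I expect the main obstacle to be the ``propagation'' step and the handling of the nonnegativity/symmetry hypotheses rather than the linear algebra. Establishing that $f_i=f_j$ on every edge forces $\textbf{f}$ to be globally constant on a component requires a connectivity argument along paths, which is conceptually simple but must be stated carefully so that disconnected components genuinely yield independent degrees of freedom. Equally delicate is the reliance on nonnegativity of $\textbf{S}$: it is precisely this assumption that makes each summand $S_{ij}(f_i-f_j)^2$ nonnegative, so that a zero total forces every term to vanish; without it, cancellation could occur and the null-space characterization would break down. Since the quadratic-form identity and the spanning argument are otherwise routine, the crux of the proof is marshalling these structural facts cleanly.
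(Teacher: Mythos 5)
Your proof is correct, but note that the paper does not actually prove this theorem: it is quoted as a known fact of spectral graph theory, attributed to \cite{alavi1991graph}, and used only as motivation for the rank constraint $rank(\textbf{L}_S)=n-k$ in Eq.~(2). What you have written is precisely the standard argument that this citation stands for: the quadratic-form identity $\textbf{f}^\texttt{T}\textbf{L}\textbf{f}=\tfrac{1}{2}\sum_{i,j}S_{ij}(f_i-f_j)^2$ (valid after symmetrizing, which is consistent with the paper's own definition $\textbf{L}_S=\textbf{D}_S-\tfrac{\textbf{S}^\texttt{T}+\textbf{S}}{2}$), positive semidefiniteness from nonnegativity of $\textbf{S}$, the conclusion that $f_i=f_j$ across every positive-weight edge, propagation along paths to constancy on each connected component, and the component indicator vectors as a basis of the null space. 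Each step is sound, and your emphasis on where the hypotheses enter is exactly right: nonnegativity is what rules out cancellation among the summands, and symmetry is what guarantees that the kernel dimension coincides with the multiplicity of the eigenvalue $0$ (for a nonsymmetric matrix, geometric and algebraic multiplicities could differ). So there is no gap and no divergence of method to report --- you have simply supplied, correctly and self-containedly, the classical proof that the paper delegates to its reference.
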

As mentioned above, the data points can be directly partitioned into $k$ clusters if the number of components in the similarity structure $\textbf{S}$ is exactly equal to $k$. \textbf{Theorem} \ref{theorem1} indicates that this condition can be achieved if the rank of Laplacian matrix is equal to $n-k$. With the analysis, we add a reasonable rank constraint in Eq.(\ref{eq:sh}) to achieve the condition. The optimization problem becomes
\begin{equation}
\small
\begin{aligned}
\label{eq:si}
&\arg \min_{\textbf{S}, \textbf{W}} \ \sum_{j=1}^N||\textbf{S}_j- \sum_{v=1}^V \ w_j^v \textbf{S}_j^v||_F^2 \\
& s.t. \ \forall j \quad \textbf{1}_N^\texttt{T}\textbf{S}_j=1, \textbf{S}_j \geq \textbf{0}, \textbf{W}_j^\texttt{T}\textbf{1}_V=1, rank(\textbf{L}_S)=n-k\\
\end{aligned}
\end{equation}
\noindent where $\textbf{L}_S=\textbf{D}_S-\frac{\textbf{S}^\texttt{T}+\textbf{S}}{2}$ is the Laplacain matrix of similarity structure $\textbf{S}$, $\textbf{D}_S=\textbf{S}\textbf{1}$ is diagonal matrix. As shown in Eq.(\ref{eq:si}), directly imposing the rank constraint $rank(\textbf{L}_S)=n-k$ will make the above problem hard to solve. Fortunately, according to Ky Fan's Theorem \cite{kafan}, we can have $\sum_{i=1}^k \delta_i(\textbf{L}_S) = \arg\min_{\textbf{F}\in \mathcal{R}^{n\times k}, \textbf{F}^\texttt{T}\textbf{F}=\textbf{I}_k} Tr(\textbf{F}^\texttt{T}\textbf{L}_S\textbf{F})$,
where $\delta_i(\textbf{L}_S)$ is the $i_{th}$ smallest eigen-values of $\textbf{L}_S$ and $\textbf{F}\in \mathcal{R}^{N\times k}$ is the relaxed cluster indicator matrix. Obviously, the rank constraint $rank(\textbf{L}_S)=n-k$ can be satisfied when $\sum_{i=1}^k \delta_i(\textbf{L}_S)=0$. To this end, we reformulate the Eq.(\ref{eq:si}) as the following
simple equivalent form
\begin{equation}
\small
\begin{aligned}
\small
\label{eq:sj}
&\arg\min_{\textbf{F}, \textbf{S}, \textbf{W}} \ \sum_{j=1}^N||\textbf{S}_j- \sum_{v=1}^V w_j^v \textbf{S}_j^v||_F^2 +  \alpha Tr(\textbf{F}^\texttt{T}\textbf{L}_S\textbf{F})\\
& s.t. \ \forall j \quad \textbf{1}_N^\texttt{T}\textbf{S}_j=1, \textbf{S}_j \geq \textbf{0}, \textbf{W}_j^\texttt{T}\textbf{1}_V=1, \textbf{F}\in \mathcal{R}^{N\times k}, \textbf{F}^\texttt{T}\textbf{F}=\textbf{I}_k
\end{aligned}
\end{equation}
\noindent As shown in the above equation, when $\alpha>0$ is large enough, the term $Tr(\textbf{F}^\texttt{T}\textbf{L}_S\textbf{F})$ is forced to be infinitely approximate 0 and the rank constraint can be satisfied accordingly. By simply transforming the rank constraint to trace in objective function, the problem in Eq.(\ref{eq:si}) can be tackled more easily.

The selected features should preserve the dynamically learned similarity structure. Conventional approaches separate the similarity structure construction and feature selection into two independent processes, which will potentially lead to sub-optimal performance. In this paper, we learn the collaborative similarity structure dynamically and further integrate it with feature selection into a unified framework. Specifically, based on the collaborative similarity structure learning in Eq.(\ref{eq:sj}), we employ sparse regression model to learn a projection matrix $\textbf{P}\in \mathcal R^{d\times k}$, so that the projected low-dimensional data $\textbf{X}\textbf{P}$ can approximate the relaxed cluster indicator $\textbf{F}$. To select the features, we impose $l_{2,1}$ norm penalty on $\textbf{P}$ to force it with row sparsity. The importance of features can be measured by the $l_2$ norm of each row feature in $\textbf{P}$. The overall optimization formulation can be derived as
\begin{equation}
\small
\begin{aligned}
\label{eq:overall}
& \arg \min_{\textbf{P}, \textbf{F}, \textbf{S}, \textbf{W}} \ \boldsymbol{\Omega}(\textbf{P}, \textbf{F}, \textbf{S}, \textbf{W})= \sum_{j=1}^N||\textbf{S}_j- \sum_{v=1}^V w_j^v \textbf{S}_j^v||_F^2 +  \\
& \qquad \qquad \qquad \alpha Tr(\textbf{F}^\texttt{T}\textbf{L}_S\textbf{F})+ \beta(||\textbf{X}\textbf{P}-\textbf{F}||_F^2 + \gamma||\textbf{P}||_{2,1})\\
& s.t. \ \forall j \quad \textbf{1}_N^\texttt{T}\textbf{S}_j=1, \textbf{S}_j \geq \textbf{0}, \textbf{W}_j^\texttt{T}\textbf{1}_V=1, \textbf{F}\in \mathcal{R}^{N\times k}, \textbf{F}^\texttt{T}\textbf{F}=\textbf{I}_k
\end{aligned}
\end{equation}
With $\textbf{P}$, the importance of features are measured by $||\textbf{P}_{i'}||_2$. The features with the $l$ largest values can be finally determined.

\subsection{Alternate Optimization}
As shown in Eq.(\ref{eq:overall}), the objective function is not convex to three variables simultaneously. In this paper, we propose an effective alternate optimization to iteratively solve the problem. Specifically, we optimize one variable by fixing the others.

\textbf{Update $\textbf{P}$}. By fixing the other variables, the optimization for $\textbf{P}$ can be derived as
\begin{equation}
\small
\begin{aligned}
\label{eq:updateP}
\arg\min_{\textbf{P}} & \ ||\textbf{X}\textbf{P}-\textbf{F}||_F^2 + \gamma||\textbf{P}||_{2,1}\\
\end{aligned}
\end{equation}
This equation is not differentiable. Hence, we transform it to following equivalent equation  \cite{DBLP:conf/nips/NieHCD10}
\begin{equation}
\small
\begin{aligned}
\label{eq:updateP1}
\arg\min_{\textbf{P}} & \ ||\textbf{X}\textbf{P}-\textbf{F}||_F^2+ \gamma Tr(\textbf{P}^\texttt{T}\boldsymbol{\Gamma} \textbf{P})
\end{aligned}
\end{equation}
\noindent $\boldsymbol{\Gamma}\in \mathcal{R}^{d\times d}$ is diagonal matrix whose $i_{th}$ diagonal element is $\frac{1}{2\sqrt{\textbf{P}_{i'}\textbf{P}_{i'}^\texttt{T}}+\epsilon}$.
$\epsilon$ is small enough constant. It is used to avoid the condition that $||\textbf{P}_{i'}||_2$ is zero. By calculating the derivations of the objective function with $\textbf{P}$ and setting it to zeros, we can obtain the updating rule for $\textbf{P}$ as
\begin{equation}
\small
\begin{aligned}
\label{eq:updateW}
\textbf{P} = (\textbf{X}^\texttt{T}\textbf{X}+\gamma \boldsymbol{\Gamma})^{-1}\textbf{X}^\texttt{T}\textbf{F}
\end{aligned}
\end{equation}
Note that $\boldsymbol{\Gamma}$ is dependent on $\textbf{P}$. We develop an iterative approach to solve $\textbf{P}$ and $\boldsymbol{\Gamma}$ until convergence. Specifically, we fix $\boldsymbol{\Gamma}$ to solve $\textbf{P}$, and vice versa.

\textbf{Update $\textbf{F}$}. By fixing the other variables, the optimization for $\textbf{F}$ can be derived as
\begin{equation}
\small
\begin{aligned}
\label{eq:updateF}
\arg\min_{\textbf{F}} & \ Tr(\textbf{F}^\texttt{T}\textbf{L}_S\textbf{F}) + \beta(||\textbf{X}\textbf{P}-\textbf{F}||_F^2 + \gamma Tr(\textbf{P}^\texttt{T}\boldsymbol{\Gamma} \textbf{P})) s.t. \textbf{F}^\texttt{T}\textbf{F}=\textbf{I}_c\\
\end{aligned}
\end{equation}
By substituting Eq.(\ref{eq:updateW}) into the objective function in Eq.(\ref{eq:updateF}), we arrive at
\begin{equation}
\small
\begin{aligned}
& Tr(\textbf{F}^\texttt{T}\textbf{L}_S\textbf{F}) + \beta(||\textbf{X}\textbf{P}-\textbf{F}||_F^2 + \gamma Tr(\textbf{P}^\texttt{T}\boldsymbol{\Gamma} \textbf{P})) \\
& = Tr(\textbf{F}^\texttt{T} (\textbf{L}_S+\beta \textbf{I}_N-\beta \textbf{X}\textbf{Q}^{-1}\textbf{X}^\texttt{T}) \textbf{F})
\end{aligned}
\end{equation}
\noindent where $\textbf{Q}=\textbf{X}^\texttt{T}\textbf{X}+\gamma \boldsymbol{\Gamma}$. With the transformation, the optimization for updating $\textbf{F}$ can be solved by simple eigen-decomposition on the matrix $\textbf{L}_S+\beta \textbf{I}_N-\beta \textbf{X}\textbf{Q}^{-1}\textbf{X}^\texttt{T}$. Specifically, the columns of $\textbf{F}$ are comprised of the $k$ eigenvectors corresponding to the $k$ smallest eigenvalues.

\textbf{Update $\textbf{S}$}. By fixing the other variables, the optimization for $\textbf{S}$ becomes
\begin{equation}
\small
\begin{aligned}
\label{eq:updateS}
& \arg\min_{\textbf{S}} \sum_{j=1}^N||\textbf{S}_j- \sum_{v=1}^V w_j^v \textbf{S}_j^v||_F^2 +  \alpha Tr(\textbf{F}^\texttt{T}\textbf{L}_S\textbf{F}) \\
& s.t. \ \forall j \ \textbf{1}_N^\texttt{T}\textbf{S}_j=1, \textbf{S}_j \geq \textbf{0}\\
\end{aligned}
\end{equation}
The above equation can be rewritten as
\begin{equation}
\small
\begin{aligned}
\label{eq:updateS1}
& \arg\min_{\textbf{S}_j} \sum_{j=1}^N||\textbf{S}_j- \sum_{v=1}^V w_j^v \textbf{S}^v_j||_F^2 +  \alpha \sum_{i,j=1}^N ||\textbf{f}_i-\textbf{f}_j||_2^2 S_{i,j}\\
& s.t. \ \forall j \ \textbf{1}_N^\texttt{T}\textbf{S}_j=1, \textbf{S}_j \geq \textbf{0}\\
\end{aligned}
\end{equation}
\noindent where $S_{i,j}$ denotes the element in the $i_{th}$ row and $j_{th}$ column of $\textbf{S}$. The optimization processes for the columns of $\textbf{S}$ are independent with each other. Hence, they can be optimized separately. Formally, $\textbf{S}$ can be solved by
\begin{equation}
\small
\begin{aligned}
\label{eq:updateS2}
& \arg\min_{\textbf{S}_j} ||\textbf{S}_j- \sum_{v=1}^V w_j^v \textbf{S}^v_j||_F^2 +  \alpha \textbf{A}_j^\texttt{T}\textbf{S}_j \quad s.t. \ \forall j \ \textbf{1}_N^\texttt{T}\textbf{S}_j=1, \textbf{S}_j \geq \textbf{0}\\
\end{aligned}
\end{equation}
Let $\textbf{A}_{j}$ be row vector with $N\times 1$ dimensions. Its $i_{th}$ element is $||\textbf{f}_i-\textbf{f}_j||_2^2$. The above optimization formula can be transformed as
\begin{equation}
\small
\begin{aligned}
\label{eq:updateS3}
& \arg\min_{\textbf{S}_j} ||\textbf{S}_j + \frac{\alpha}{2}\textbf{A}_{j}- \sum_{v=1}^V w_j^v \textbf{S}^v_j||_F^2 \quad s.t. \ \forall j \ \textbf{1}_N^\texttt{T}\textbf{S}_j=1, \textbf{S}_j \geq \textbf{0}\\
\end{aligned}
\end{equation}
This problem can be solved by an efficient iterative algorithm \cite{DBLP:conf/ijcai/HuangNH15}.

\textbf{Update $\textbf{W}$}. Similar to $\textbf{S}$, the optimization processes for the columns of $\textbf{W}$ are independent with each other. Hence, they can be optimized separately. Formally, its $j_{th}$ column $\textbf{W}_j$ is solved by
\begin{equation}
\small
\begin{aligned}
\label{eq:updateweight}
& \arg\min_{\textbf{W}_j} \ ||\textbf{S}_j - \sum_{v=1}^V w_j^v \textbf{S}^v_j||_F^2 \quad s.t. \ \textbf{W}_j^\texttt{T}\textbf{1}_V=1 \\
\end{aligned}
\end{equation}
The objective function in Eq.(\ref{eq:updateweight}) can be rewritten as
\begin{equation}
\small
\begin{aligned}
&||\textbf{S}_j - \sum_{v=1}^V w_j^v \textbf{S}^v_j||_F^2=||\sum_{v=1}^V w_j^v\textbf{S}_j - \sum_{v=1}^V w_j^v \textbf{S}^v_j||_F^2 \\
& = ||\sum_{v=1}^V w_j^v(\textbf{S}_j - \textbf{S}^v_j)||_F^2 = ||\textbf{B}_j\textbf{W}_j||_F^2 = \textbf{W}_j^\texttt{T}\textbf{B}_j^\texttt{T}\textbf{B}_j\textbf{W}_j
\end{aligned}
\end{equation}
\noindent where $\textbf{B}_j^v=\textbf{S}_j - \textbf{S}^v_j$, $\textbf{B}_j=[\textbf{B}_j^1,...,\textbf{B}_j^v,..,\textbf{B}_j^V]$.

We can obtain the Lagrangian function of problem (\ref{eq:updateweight})
\begin{equation}
\small
\begin{aligned}
\label{eq:updateweight2}
\mathcal{L}(W_j, \psi)= \textbf{W}_j^\texttt{T}\textbf{B}_j^\texttt{T}\textbf{B}_j\textbf{W}_j + \psi(1-\textbf{W}_j^\texttt{T} \textbf{1}_V)
\end{aligned}
\end{equation}
\noindent $\psi$ is also Lagrangian multiplier. By calculating the derivative of (\ref{eq:updateweight2}) with $\textbf{W}_j$ and setting it to 0, we obtain the updating rule of $\textbf{W}_j$ as
\begin{equation}
\small
\begin{aligned}
\label{eq:updatefeature}
\textbf{W}_j = \frac{(\textbf{B}_j^\texttt{T}\textbf{B}_j)^{-1}\textbf{1}_V}{\textbf{1}_V^\texttt{T}(\textbf{B}_j^\texttt{T}\textbf{B}_j)^{-1}\textbf{1}_V}
\end{aligned}
\end{equation}

\begin{table*}
\scriptsize
\centering
\vspace{-2mm}
\begin{tabular}{ccccccccc}
\hline
\multicolumn{1}{l}{Dataset}             & Feature dimension & LapScor             & SPEC              & MRSF              & MVFS              & AUMFS              & AMFS               & ACSL               \\ \hline
                                        & 100               & 0.2867 & \textbf{0.2952}  & 0.2838 & 0.2762 & 0.2810  & 0.28571 & \textbf{0.3000}  \\
                                        & 200               & 0.2952  & 0.2905 & \textbf{0.3152} & 0.2905 & \textbf{0.3143}  & 0.2895  & 0.3124  \\
\multicolumn{1}{l}{\textbf{MSRC-V1}}   & 300               & 0.2905   & \textbf{0.3119} & 0.2895 & 0.2833 & 0.2833  & 0.2952  & \textbf{0.3124}  \\
                                        & 400               & 0.2952   & \textbf{0.3181} & 0.3057 & 0.3000 & 0.2952  & 0.2924  & \textbf{0.3219}  \\
                                        & 500               & 0.3038   & 0.2976 & 0.3038 & \textbf{0.3095} & 0.3048  & 0.2990  & \textbf{0.3400}  \\ \hline

                                        & 100               & 0.5844   & 0.4795 & \textbf{0.6207} & 0.5938 & 0.3345  & 0.3302 & \textbf{0.6106}  \\
                                        & 200               & \textbf{0.6148}   & 0.5520 & 0.6002 & 0.5820 & 0.4225  & 0.4226  & \textbf{0.6389} \\
\multicolumn{1}{l}{\textbf{Handwritten Numeral}} & 300               & \textbf{0.5980}   & 0.5384 & \textbf{0.6028} & 0.5737 & 0.4757  & 0.4497  & 0.5930  \\
                                        & 400               & 0.6068  & \textbf{0.6102} & 0.5890 & 0.5808 & 0.4909  & 0.4755 & \textbf{0.6327} \\
                                        & 500               & \textbf{0.5909}   & 0.5666 & 0.5795 & 0.5888 & 0.4889  & 0.5006 & \textbf{0.5969}  \\ \hline

                                        & 100               & \textbf{0.2873}   & \textbf{0.2873} & 0.2851 & 0.2717 & 0.1305  & 0.2165  & 0.2861  \\
                                        & 200               & \textbf{0.2896}   & 0.2840 & 0.2754 & 0.2774 & 0.1274  & 0.2313  & \textbf{0.2924}  \\
\multicolumn{1}{l}{\textbf{Youtube}}         & 300               & 0.2835  & 0.2832 & \textbf{0.2862} & 0.2828 & 0.1357 & 0.2374  & \textbf{0.2906}  \\
                                        & 400               & 0.2862  & \textbf{0.2889} & 0.2779 & 0.2807 & 0.1329  & 0.2433  & \textbf{0.2993}  \\
                                        & 500               & \textbf{0.2857}  & 0.2853 & 0.2802 & 0.2854 & 0.1329  & 0.2546  & \textbf{0.3003}  \\ \hline

                                       & 100               & 0.3687   & 0.3327 &  0.3707   & 0.2044  & 0.4231   & \textbf{0.4313} & \textbf{0.5845}  \\
                                        & 200             & 0.3619  & 0.3295   & 0.3501   & 0.2104  & 0.4656   & \textbf{0.4816}  &  \textbf{0.5616}  \\
\multicolumn{1}{l}{\textbf{Outdoor Scene}}         & 300            & 0.3634   & 0.3740  &  0.3576  & 0.2150  & \textbf{0.4949}   & 0.4854 & \textbf{0.5801}   \\
                                        & 400              & 0.3804  & 0.3653 & 0.3679  &  0.2153  & \textbf{0.5061}  & 0.4926  &  \textbf{0.5927}   \\
                                        & 500              & 0.3574   &  0.3620  & 0.3687   & 0.2255  & 0.5003   &\textbf{0.5045}  & \textbf{0.6103}
\\ \hline
\end{tabular}
\caption{ACC of different methods with different numbers of selected features by using K-means for clustering.}
\label{Tab1}
\end{table*}

The main steps for solving problem (\ref{eq:overall}) are summarized in Algorithm \ref{alg:summary}.
\begin{algorithm}
\vspace{-1mm}
\caption{Multi-view feature selection via collaborative similarity structure learning with adaptive neighbors.}
\label{alg:summary}
\begin{algorithmic}[1]
\REQUIRE ~~\\
The pre-constructed similarity structures in $v$ views $\{\textbf{S}^v\}_{v=1}^V$, the number of clusters $k$, the parameters $\alpha,\beta,\gamma$.\\
\ENSURE ~~\\
The collaborative similarity structure $\textbf{S}$, the projection matrix $\textbf{P}$ for feature selection, $l$ identified features.\\
\STATE Initialize $\textbf{W}$ with $\frac{1}{V}$, the collaborative similarity structure $\textbf{S}$ with the weighted sum of $\{\textbf{S}^v\}_{v=1}^V$. We also initialize $\textbf{F}$ with the solution of problem (\ref{eq:updateF}) by substituting the Laplacian matrix calculated from the new $\textbf{S}$.
\REPEAT
    \STATE Update $\textbf{P}$ with Eq.(\ref{eq:updateW}).
    \STATE Update $\textbf{F}$ by solving the problem in Eq.(\ref{eq:updateF}).
    \STATE Update $\textbf{S}$ with Eq.(\ref{eq:updateS3}).
    \STATE Update $\textbf{W}$ with Eq.(\ref{eq:updatefeature}).
\UNTIL{Convergence}\\
\textbf{Feature Selection}
\STATE Calculate $||\textbf{P}_{i'}||_2, (i=1,2,...,d)$ and rank them in descending order. The $l$ features with the top rank orders are finally determined as the features to be selected.
\end{algorithmic}
\end{algorithm}
\vspace{-3mm}
\subsection{Convergence Analysis}
\label{conver_analysis}
The convergence of solving problem (\ref{eq:updateP1}) can be proven by the following theorem.
\begin{theorem}
\label{theoremupdatP}
  The iterative optimization process for solving Eq.(\ref{eq:updateP}) will monotonically decrease the objective function value until convergence.
  \vspace{-3mm}
\end{theorem}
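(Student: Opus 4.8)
The plan is to read the update in Eq.(\ref{eq:updateW}) as one sweep of a majorization--minimization (iteratively reweighted least squares) scheme and to show that alternating between solving the reweighted problem Eq.(\ref{eq:updateP1}) for $\textbf{P}$ and refreshing $\boldsymbol{\Gamma}$ never increases the genuine objective of Eq.(\ref{eq:updateP}). Write $\textbf{P}^{(t)}$ for the iterate at round $t$, let $\boldsymbol{\Gamma}^{(t)}$ be the diagonal matrix computed from it, and abbreviate the smooth part by $g(\textbf{P})=||\textbf{X}\textbf{P}-\textbf{F}||_F^2$; for transparency I would first run the argument with the safeguard $\epsilon$ set to $0$, so that $\boldsymbol{\Gamma}^{(t)}_{ii}=1/(2||\textbf{P}^{(t)}_{i'}||_2)$. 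Since $\boldsymbol{\Gamma}^{(t)}$ is positive definite, the surrogate in Eq.(\ref{eq:updateP1}) is a strictly convex quadratic whose unique global minimizer is exactly $\textbf{P}^{(t+1)}$ given by Eq.(\ref{eq:updateW}); evaluating that surrogate at $\textbf{P}^{(t+1)}$ versus $\textbf{P}^{(t)}$ therefore yields the descent inequality
\begin{equation}
\small
\begin{aligned}
&g(\textbf{P}^{(t+1)}) + \gamma\, Tr\big((\textbf{P}^{(t+1)})^\texttt{T}\boldsymbol{\Gamma}^{(t)}\textbf{P}^{(t+1)}\big)\\
&\qquad \le g(\textbf{P}^{(t)}) + \gamma\, Tr\big((\textbf{P}^{(t)})^\texttt{T}\boldsymbol{\Gamma}^{(t)}\textbf{P}^{(t)}\big).
\end{aligned}
\end{equation}

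The crux is then to transfer this surrogate descent back to the true $l_{2,1}$ objective, for which I would use the concavity of the square root. For scalars $a,b>0$ one has $\sqrt{a}-\frac{a}{2\sqrt{b}}\le \frac{\sqrt{b}}{2}$ (the tangent line of $\sqrt{\cdot}$ at $b$ lies above the curve); applying it row by row with $a=||\textbf{P}^{(t+1)}_{i'}||_2^2$ and $b=||\textbf{P}^{(t)}_{i'}||_2^2$ gives
\begin{equation}
\small
\begin{aligned}
||\textbf{P}^{(t+1)}_{i'}||_2 - \frac{||\textbf{P}^{(t+1)}_{i'}||_2^2}{2||\textbf{P}^{(t)}_{i'}||_2} \le ||\textbf{P}^{(t)}_{i'}||_2 - \frac{||\textbf{P}^{(t)}_{i'}||_2^2}{2||\textbf{P}^{(t)}_{i'}||_2}.
\end{aligned}
\end{equation}
Summing over $i$ and recognizing $\sum_i ||\textbf{P}_{i'}||_2=||\textbf{P}||_{2,1}$ and $\sum_i \frac{||\textbf{P}_{i'}||_2^2}{2||\textbf{P}^{(t)}_{i'}||_2}=Tr(\textbf{P}^\texttt{T}\boldsymbol{\Gamma}^{(t)}\textbf{P})$ converts this into $\gamma||\textbf{P}^{(t+1)}||_{2,1}-\gamma Tr((\textbf{P}^{(t+1)})^\texttt{T}\boldsymbol{\Gamma}^{(t)}\textbf{P}^{(t+1)}) \le \gamma||\textbf{P}^{(t)}||_{2,1}-\gamma Tr((\textbf{P}^{(t)})^\texttt{T}\boldsymbol{\Gamma}^{(t)}\textbf{P}^{(t)})$.

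Adding this reweighted inequality to the first one cancels the two $\gamma\,Tr(\cdot)$ terms on each side and leaves $g(\textbf{P}^{(t+1)})+\gamma||\textbf{P}^{(t+1)}||_{2,1}\le g(\textbf{P}^{(t)})+\gamma||\textbf{P}^{(t)}||_{2,1}$, i.e. the objective of Eq.(\ref{eq:updateP}) is nonincreasing along the iterations. Since that objective is a sum of a squared Frobenius norm and a nonnegative norm it is bounded below by $0$, and a monotonically decreasing sequence that is bounded below converges; hence the procedure converges and the theorem follows. The main obstacle is precisely the row-wise step: one has to recognize that the quadratic term $||\textbf{P}_{i'}||_2^2/(2||\textbf{P}^{(t)}_{i'}||_2)$ is a tangent-type majorizer of the nonsmooth $||\textbf{P}_{i'}||_2$ at the current iterate, so that driving the reweighted surrogate down is guaranteed to drive the true penalty down as well; everything else is linear algebra and the cancellation bookkeeping. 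A minor technical point is the $\epsilon>0$ safeguard in $\boldsymbol{\Gamma}$, which I would handle either by retaining it throughout and applying the same concave-majorizer inequality to the correspondingly smoothed penalty, or by passing to the limit $\epsilon\to 0$, neither of which disturbs the monotonicity.
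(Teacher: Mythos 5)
Your proof is correct and follows essentially the same route as the paper's: the optimality of the reweighted least-squares solution gives the surrogate descent inequality, and the tangent-line (concavity) bound you use row-wise is exactly Lemma~1 of Nie et al.\ that the paper invokes, with the same cancellation yielding descent of the true $\ell_{2,1}$ objective. The only cosmetic differences are that you derive that lemma from concavity rather than citing it, and you set $\epsilon=0$ first and patch it afterwards, whereas the paper carries the $\epsilon$-smoothed penalty $\sqrt{\textbf{P}_{i'}^\texttt{T}\textbf{P}_{i'}+\epsilon}$ through every step.
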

\begin{proof}
  Let $\widehat{\textbf{P}}$ be the newly updated $\textbf{P}$, we can obtain the following inequality
\begin{equation}
\small
\begin{aligned}
\label{ieq:1}
& ||\textbf{X}\widehat{\textbf{P}}-\textbf{F}||_F^2+ \gamma Tr(\widehat{\textbf{P}}^\texttt{T}\boldsymbol{\Gamma} \widehat{\textbf{P}}) \leq ||\textbf{X}\textbf{P}-\textbf{F}||_F^2+ \gamma Tr(\textbf{P}^\texttt{T}\boldsymbol{\Gamma} \textbf{P})
\end{aligned}
\end{equation}
By adding $\gamma \sum_{i=1}^d \frac{\epsilon}{2\sqrt{\textbf{P}_{i'}^\texttt{T}\textbf{P}_{i'}}+\epsilon}$ to the both sides of the inequality (\ref{ieq:1}) and substituting $\boldsymbol\Gamma$, the inequality can be rewritten as
\begin{equation}
\small
\begin{aligned}
\label{eq:theorem1}
||\textbf{X}\widehat{\textbf{P}}-\textbf{F}||_F^2+ \gamma \sum_{i=1}^d \frac{\widehat{\textbf{P}}_{i'}^\texttt{T}\widehat{\textbf{P}}_{i'}+\epsilon}{2\sqrt{\textbf{P}_{i'}^\texttt{T}\textbf{P}_{i'}+\epsilon}} \leq  \\ ||\textbf{X}\textbf{P}-\textbf{F}||_F^2+ \gamma \sum_{i=1}^d \frac{\textbf{P}_{i'}^\texttt{T}\textbf{P}_{i'}+\epsilon}{2\sqrt{\textbf{P}_{i'}^\texttt{T}\textbf{P}_{i'}+\epsilon}}
\end{aligned}
\end{equation}
On the other hand, according to the \textbf{Lemma} 1 in \cite{DBLP:conf/nips/NieHCD10}, we can obtain that for any positive number $u$ and $v$, we can have
\begin{equation}
\small
\begin{aligned}
\sqrt{u}-\frac{\sqrt{u}}{2\sqrt{v}}\leq \sqrt{v}-\frac{\sqrt{v}}{2\sqrt{v}}
\end{aligned}
\end{equation}

Then, we can obtain that
\begin{equation}
\small
\begin{aligned}
\label{eq:theorem2}
\gamma \sum_{i=1}^d\sqrt{\widehat{\textbf{P}}_{i'}^\texttt{T}\widehat{\textbf{P}}_{i'} + \epsilon} - \gamma \sum_{i=1}^d \frac{\widehat{\textbf{P}}_{i'}^\texttt{T}\widehat{\textbf{P}}_{i'}+\epsilon}{2\sqrt{\textbf{P}_{i'}^\texttt{T}\textbf{P}_{i'}+\epsilon}}\\
\leq  \gamma \sum_{i=1}^d\sqrt{\textbf{P}_{i'}^\texttt{T}\textbf{P}_{i'}+\epsilon} - \gamma \sum_{i=1}^d \frac{\textbf{P}_{i'}^\texttt{T}\textbf{P}_{i'}+\epsilon}{2\sqrt{\textbf{P}_{i'}^\texttt{T}\textbf{P}_{i'}+\epsilon}}
\end{aligned}
\end{equation}
By summing the above inequalities (\ref{eq:theorem1}) and (\ref{eq:theorem2}), we arrive at
\begin{equation}
\small
\begin{aligned}
&||\textbf{X}\widehat{\textbf{P}}-\textbf{F}||_F^2+ \gamma \sum_{i=1}^d\sqrt{\widehat{\textbf{P}}_{i'}^\texttt{T}\widehat{\textbf{P}}_{i'} + \epsilon}\\
& \leq ||\textbf{X}\textbf{P}-\textbf{F}||_F^2+ \gamma \sum_{i=1}^d\sqrt{\textbf{P}_{i'}^\texttt{T}\textbf{P}_{i'}+\epsilon}
\end{aligned}
\end{equation}
We can derive that
\begin{equation}
\small
\begin{aligned}
||\textbf{X}\widehat{\textbf{P}}-\textbf{F}||_F^2+ \gamma ||\widehat{\textbf{P}}||_{2,1}
\leq ||\textbf{X}\textbf{P}-\textbf{F}||_F^2+  \gamma ||\textbf{P}||_{2,1}
\end{aligned}
\end{equation}
\end{proof}

The convergence of solving Algorithm \ref{alg:summary} can be proven by the following theorem.
\begin{theorem}
  The iterative optimization in Algorithm \ref{alg:summary} can monotonically decrease the objective function of problem (\ref{eq:overall}) until convergence.
\end{theorem}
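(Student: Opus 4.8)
The plan is to prove convergence through the standard block coordinate descent argument: the objective $\boldsymbol{\Omega}(\textbf{P},\textbf{F},\textbf{S},\textbf{W})$ is bounded below, and one full sweep of the four block updates in Algorithm \ref{alg:summary} never increases its value; a monotonically non-increasing sequence that is bounded below necessarily converges. Boundedness below is immediate, since every term of $\boldsymbol{\Omega}$ is nonnegative: the Frobenius and $l_{2,1}$ norms are nonnegative, and $Tr(\textbf{F}^\texttt{T}\textbf{L}_S\textbf{F})\geq 0$ because $\textbf{L}_S$ is a positive semidefinite graph Laplacian, while $\alpha,\beta,\gamma>0$. Hence $\boldsymbol{\Omega}\geq 0$ throughout the iterations.

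I would then verify the non-increase property one block at a time, writing $\boldsymbol{\Omega}^{(t)}$ for the value at the start of sweep $t$. The $\textbf{W}$-update is the cleanest: subproblem (\ref{eq:updateweight}) is a convex quadratic in each $\textbf{W}_j$ under a single linear equality constraint, and (\ref{eq:updatefeature}) is its exact Lagrangian minimizer, so this step globally minimizes the $\textbf{W}$-dependent part of $\boldsymbol{\Omega}$ while the remaining terms stay fixed. The $\textbf{S}$-update separates column-wise into the independent problems (\ref{eq:updateS3}), each solved to optimality by the cited iterative scheme \cite{DBLP:conf/ijcai/HuangNH15}; it therefore also cannot increase $\boldsymbol{\Omega}$.

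For the coupled $\textbf{P}$- and $\textbf{F}$-steps I would treat them as a single partial minimization over the $(\textbf{P},\textbf{F})$-dependent terms, since the $\textbf{F}$-update (\ref{eq:updateF}) is carried out after substituting the optimal $\textbf{P}$ of (\ref{eq:updateW}) into the objective. The $\textbf{P}$-step is covered by the already established \textbf{Theorem} \ref{theoremupdatP}, which guarantees that its inner reweighted iteration monotonically decreases $||\textbf{X}\textbf{P}-\textbf{F}||_F^2+\gamma||\textbf{P}||_{2,1}$, i.e. exactly the $\textbf{P}$-dependent portion of $\boldsymbol{\Omega}$, while all other terms are free of $\textbf{P}$. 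The $\textbf{F}$-step then minimizes $\alpha Tr(\textbf{F}^\texttt{T}\textbf{L}_S\textbf{F})+\beta||\textbf{X}\textbf{P}-\textbf{F}||_F^2$ over the Stiefel constraint $\textbf{F}^\texttt{T}\textbf{F}=\textbf{I}_k$, whose global optimum is attained by the $k$ smallest eigenvectors of $\textbf{L}_S+\beta\textbf{I}_N-\beta\textbf{X}\textbf{Q}^{-1}\textbf{X}^\texttt{T}$ by Ky Fan's theorem. Chaining the four per-block inequalities gives $\boldsymbol{\Omega}^{(t+1)}\leq\boldsymbol{\Omega}^{(t)}$, and together with $\boldsymbol{\Omega}\geq 0$ the monotone convergence theorem closes the argument.

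I expect the main obstacle to be the bookkeeping around the $l_{2,1}$ surrogate and the profiled-out $\textbf{P}$. In both the $\textbf{P}$- and $\textbf{F}$-steps the paper replaces $\gamma||\textbf{P}||_{2,1}$ by the auxiliary quadratic $\gamma Tr(\textbf{P}^\texttt{T}\boldsymbol{\Gamma}\textbf{P})$ with $\boldsymbol{\Gamma}$ re-estimated from the current $\textbf{P}$, so one must be careful that a decrease in this surrogate really implies a decrease in the true penalty appearing in $\boldsymbol{\Omega}$ --- this is precisely the role of \textbf{Theorem} \ref{theoremupdatP} and the concavity inequality $\sqrt{u}-\frac{\sqrt{u}}{2\sqrt{v}}\leq\sqrt{v}-\frac{\sqrt{v}}{2\sqrt{v}}$ --- and that within the $\textbf{F}$-step the surrogate is a constant independent of $\textbf{F}$ and may be dropped without altering the minimizer. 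Keeping the surrogate aligned with the genuine objective across the two coupled sub-updates, so that the telescoping of the block inequalities remains valid on the single function $\boldsymbol{\Omega}$, is the delicate point; the remaining steps are routine.
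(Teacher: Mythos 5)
Your proposal is correct and follows essentially the same route as the paper's own proof: a block coordinate descent argument that chains four per-block non-increase inequalities, with the $\textbf{P}$-step discharged by \textbf{Theorem}~\ref{theoremupdatP}, the $\textbf{S}$-step by column-wise exact minimization of the quadratic subproblem (\ref{eq:updateS3}), and the $\textbf{W}$-step by the exact Lagrangian solution (\ref{eq:updatefeature}) of a convex quadratic under one linear constraint. Two points where you diverge are worth noting, and both are in your favor. First, you explicitly establish boundedness below ($\boldsymbol{\Omega}\geq 0$, using positive semidefiniteness of $\textbf{L}_S$) and invoke monotone convergence to justify the phrase ``until convergence''; the paper only proves the monotone decrease and leaves the limiting argument implicit. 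Second, and more substantively, for the $\textbf{F}$-step the paper argues that subproblem (\ref{eq:updateF}) is ``convex'' because a Hessian is positive semidefinite --- but this is not a sound justification, since the constraint $\textbf{F}^\texttt{T}\textbf{F}=\textbf{I}_k$ is a non-convex (Stiefel) constraint; the real reason the update cannot increase the objective is the one you give, namely that by Ky Fan's theorem the eigenvectors of the $k$ smallest eigenvalues of $\textbf{L}_S+\beta \textbf{I}_N-\beta \textbf{X}\textbf{Q}^{-1}\textbf{X}^\texttt{T}$ attain the global minimum of the trace over that constraint set. Your treatment of $\textbf{P}$ and $\textbf{F}$ as a single partial minimization with $\textbf{P}$ profiled out, and your care that the $l_{2,1}$ surrogate $\gamma Tr(\textbf{P}^\texttt{T}\boldsymbol{\Gamma}\textbf{P})$ is constant in $\textbf{F}$ while \textbf{Theorem}~\ref{theoremupdatP} converts surrogate decrease into decrease of the true penalty, matches what the paper does implicitly but states more precisely. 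No gaps.
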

\begin{proof}
  As shown in \textbf{Theorem} \ref{theoremupdatP}, updating $\textbf{P}$ will monotonically decrease the objective function in problem (\ref{eq:overall}) ($t$ is number of iterations).
  \begin{equation}
  \small
    \boldsymbol{\Omega}(\textbf{P}^{(t)}, \textbf{F}, \textbf{S}, \textbf{W})\geq \boldsymbol{\Omega}(\textbf{P}^{(t+1)}, \textbf{F}, \textbf{S}, \textbf{W})
  \end{equation}

  By fixing other variables and updating $\textbf{F}$, the objective function in Eq.(\ref{eq:updateF}) is convex (The Hessian matrix of the Lagrangian function of Eq.(\ref{eq:updateF}) is positive semidefinite \cite{alavi1991graph}). Therefore, we can obtain that
  \begin{equation}
  \small
    \boldsymbol{\Omega}(\textbf{P}, \textbf{F}^{(t)}, \textbf{S}, \textbf{W})\geq \boldsymbol{\Omega}(\textbf{P}, \textbf{F}^{(t+1)}, \textbf{S}, \textbf{W})
  \end{equation}

  By fixing other variables and updating $\textbf{S}$, optimizing the Eq.(\ref{eq:updateS3}) is a typical Quadratic programming problem. The Hessian matrix of the Lagrangian function of problem (\ref{eq:updateS3}) is also $\textbf{1}_N\textbf{1}_N^\texttt{T}$ that is positive semidefinite.
  Therefore, we can obtain that
  \begin{equation}
  \small
    \boldsymbol{\Omega}(\textbf{P}, \textbf{F}, \textbf{S}^{(t)}, \textbf{W})\geq \boldsymbol{\Omega}(\textbf{P}, \textbf{F}, \textbf{S}^{(t+1)}, \textbf{W})
  \end{equation}

  By fixing other variables and updating $\textbf{W}$, the Hessian matrix of Eq.(\ref{eq:updateweight2}) is $\textbf{B}_j^\texttt{T}\textbf{B}_j$. It is positive semidefinite as
  $\textbf{W}_j^\texttt{T}\textbf{B}_j^\texttt{T}\textbf{B}_j\textbf{W}_j)=||\textbf{B}_j\textbf{W}_j||_F^2\geq 0$. Hence, the objective function for optimizing $\textbf{W}$ is also convex. Then, we arrive at
  \begin{equation}
  \small
    \boldsymbol{\Omega}(\textbf{P}, \textbf{F}, \textbf{S}, \textbf{W}^{(t)})\geq \boldsymbol{\Omega}(\textbf{P}, \textbf{F}, \textbf{S}, \textbf{W}^{(t+1)})
  \end{equation}

\end{proof}
\begin{table*}[]
\scriptsize
\centering
\vspace{-2mm}
\begin{tabular}{ccccccccc}
\hline
\multicolumn{1}{l}{Dataset}             & Feature dimension & LapScor           & SPEC              & MRSF              & MVFS              & AUMFS             & AMFS               & ACSL              \\ \hline
                                        & 100               & \textbf{0.1653}  & \textbf{0.1930}   & 0.1555  & 0.1362  & 0.1146  & 0.12681  & 0.1635  \\
                                        & 200               & 0.1730  & 0.1518  & 0.1754  & 0.1502  & \textbf{0.1799} & 0.1591   & \textbf{0.1875}  \\
\multicolumn{1}{l}{\textbf{MSRC-v1}}   & 300               & 0.1632 & 0.1637 & \textbf{0.1713} & 0.1358 & 0.1341 & 0.1609  & \textbf{0.1912} \\
                                        & 400               & 0.1815 & \textbf{0.2195} & 0.1787 & 0.1407 & 0.1716 & 0.1595   & \textbf{0.1905} \\
                                        & 500               & 0.1672 & \textbf{0.2027} & 0.1813 & 0.1798 & 0.1735 & 0.1670  & \textbf{0.2146} \\ \hline

                                        & 100               & \textbf{0.5967} & 0.4751 & 0.5927 & 0.5485 & 0.2738 & 0.2744  & \textbf{0.6403} \\
                                        & 200               & \textbf{0.6050} & 0.5413 & 0.5943 & 0.5538 & 0.3720 & 0.3718  & \textbf{0.6513} \\
\multicolumn{1}{l}{\textbf{Handwritten Numeral}} & 300               & 0.5962 & \textbf{0.6068} & \textbf{0.6051} & 0.5584 & 0.4101 & 0.4013 & 0.5932 \\
                                        & 400               & 0.6014 & 0.6010 & \textbf{0.6015} & 0.5690 & 0.4436 & 0.4423  & \textbf{0.6025} \\
                                        & 500               & \textbf{0.6078} & 0.5799 & \textbf{0.5983} & 0.5974 & 0.4796 & 0.4831  & 0.5926 \\ \hline

                                        & 100               & \textbf{0.2690} & 0.2683 & 0.2610 & 0.2531 & 0.0121 & 0.1280 & \textbf{0.2705} \\
                                        & 200               & \textbf{0.2693} & 0.2688 & 0.2561 & 0.2604 & 0.0108 & 0.1474 & \textbf{0.2699} \\
\multicolumn{1}{l}{\textbf{Youtube}}         & 300               & 0.2627 & \textbf{0.2673} & \textbf{0.2677} & 0.2605 & 0.0152 & 0.1597  & 0.2570 \\
                                        & 400               & \textbf{0.2670} & 0.2647 & 0.2606 & \textbf{0.2736} & 0.0142 & 0.1817  & \textbf{0.2743} \\
                                        & 500               & 0.2641 & 0.2696 & 0.2635 & \textbf{0.2771} & 0.0123 & 0.1982 & \textbf{0.2736} \\ \hline

										& 100               & 0.2228   & 0.1933  &  0.2203   & 0.0595 & \textbf{0.3314}  & 0.3267  & \textbf{0.4717}  \\
                                        & 200               & 0.2174  & 0.2023  &  0.2021  & 0.0522 & \textbf{0.3801}  & 0.3772 & \textbf{0.4860}   \\
\multicolumn{1}{l}{\textbf{Outdoor Scene}}       & 300               & 0.2264   & 0.2414 &  0.2142   & 0.0562  & \textbf{0.4157}   & 0.3975  & \textbf{0.4838}   \\
                                        & 400               & 0.2358   & 0.2337  & 0.2211 & 0.0588  & \textbf{0.4152}   & 0.4023   & \textbf{0.5111}   \\
                                        & 500               & 0.2163  & 0.2316  &  0.2196   & 0.0769  & \textbf{0.4210}  & 0.4159  & \textbf{0.5211}
\\ \hline
\end{tabular}
\caption{NMI of different methods with different numbers of selected features by using K-means for clustering.}
\label{Tab2}
\vspace{-2mm}
\end{table*}

\section{Experiments}
\label{sec:exp}
\subsection{Experimental Datasets}
1) \textbf{MSRC-v1} \cite{1541329}. The dataset contains 240 images in 8 class as a whole. Following the setting in \cite{Grauman2006Unsupervised}, we select 7 classes composed of tree, building, airplane, cow, face, car, bicycle and each class has 30 images. We extract 5 visual features from each image: color moment with dimension 48, GIST with 512 dimension, SIFT with dimension 1230, CENTRIST feature with 210 dimension, and local binary pattern (LBP) with 256 dimension. 2) \textbf{Handwritten Numeral} \cite{vanbreukelen1998handwritten}. This dataset is comprised of 2,000 data points from 0 to 9 digit classes. 6 features are used to represent each digit. They are 76 dimensional Fourier coefficients of the character shapes,
216 dimensional profile correlations, 64 dimensional Karhunen-love coefficients, 240 dimensional pixel averages in $2\times3$ windows,
47 dimensional Zernike moment and 6 dimensional morphological features. 3) \textbf{Youtube} \cite{5206845}. This real-world dataset is collected from Youtube. It contains intended camera motion, variations of the object scale, viewpoint, illumination
and cluttered background. The dataset is comprised of 1,596 video sequences in 11 actions. 4) \textbf{Outdoor Scene}  \cite{Monadjemi2002Experiments}.  The outdoor scene dataset  contains 2,688 color images that belong to 8 outdoor scene categories. 4 visual features are extracted from each image: color moment with dimension 432, GIST with dimension 512, HOG with dimension 256, and LBP with dimension 48.


\subsection{Experimental Setting}
\textbf{Baselines}. We compare ACSL with several representative unsupervised multi-view feature selection methods on clustering performance. The compared methods include, three single view feature selection approaches (Laplacian score (LapScor) \cite{DBLP:conf/nips/HeCN05}, spectral feature selection (SPEC) \cite{DBLP:conf/icml/ZhaoL07} and minimum redundancy spectral feature selection (MRSF) \cite{DBLP:conf/aaai/ZhaoWL10}), and three multi-view feature selection approach (adaptive multi-view feature selection (AMFS) \cite{WANG2016691}, multi-view feature selection (MVFS) \cite{DBLP:conf/sdm/GaoHLT13} and adaptive unsupervised multi-view feature selection (AUMFS) \cite{Feng2013}). \textbf{Evluation Metrics.}
We employ standard metrics: clustering accuracy (ACC) and normalized mutual information (NMI), for performance comparison. Each experiment is performed 50 times and the mean results are reported. \textbf{Parameter Setting.} In implementation of all methods, the neighbor graph is adopted to construct the initial affinity matrices. The number of neighbors is set to 10 in all methods. In ACSL, $\alpha, \beta, \gamma$ are chosen from $10^{-4}$ to $10^4$. The parameters in all compared approaches are carefully adjusted to report the best results.
\begin{figure*}
\centering                     
\subfigure[$\alpha$ is fixed to $10^4$]{
\includegraphics[width=45mm]{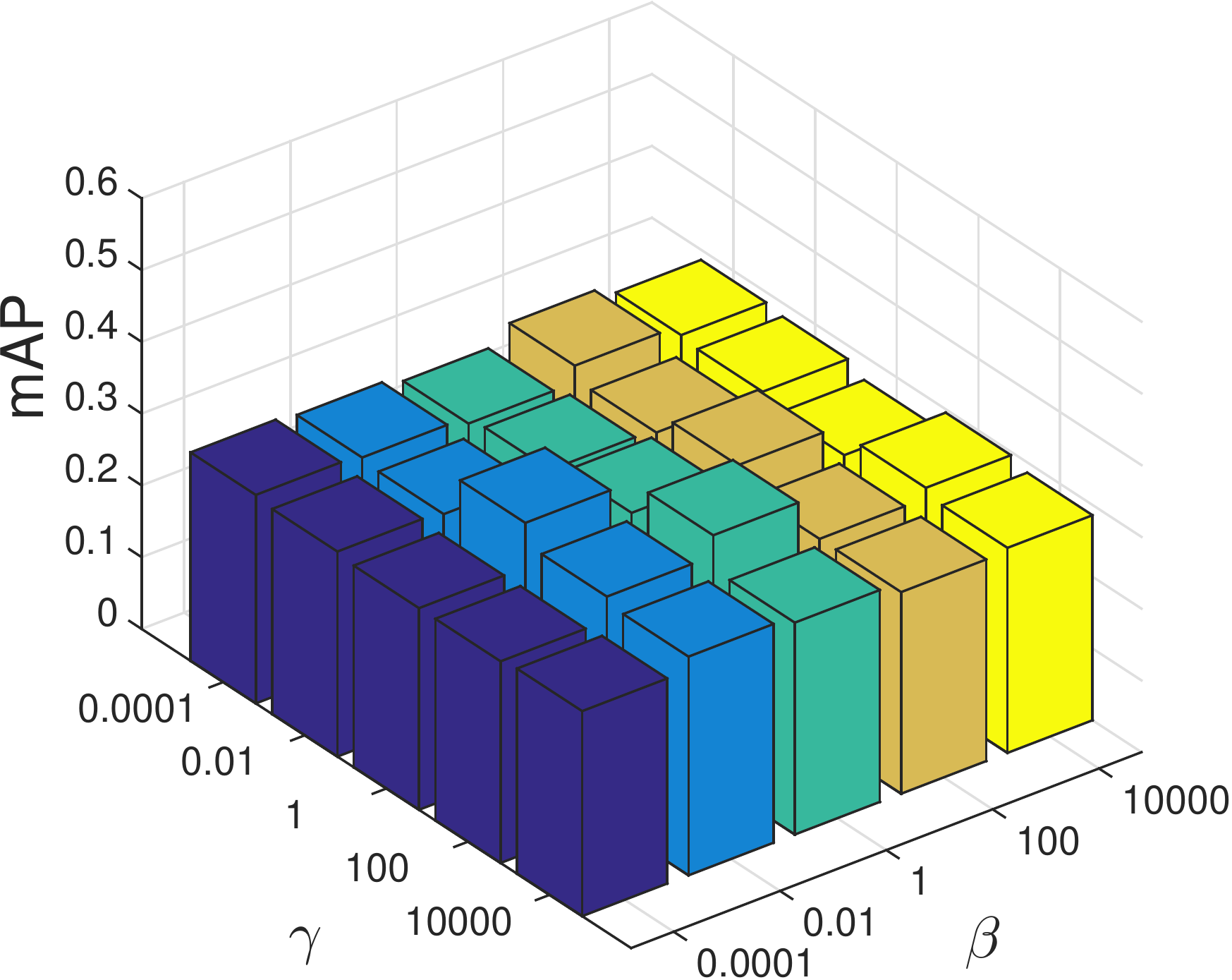}}
\hspace{0.1in}
\subfigure[$\beta$ is fixed to $10^{-2}$]{
\includegraphics[width=45mm]{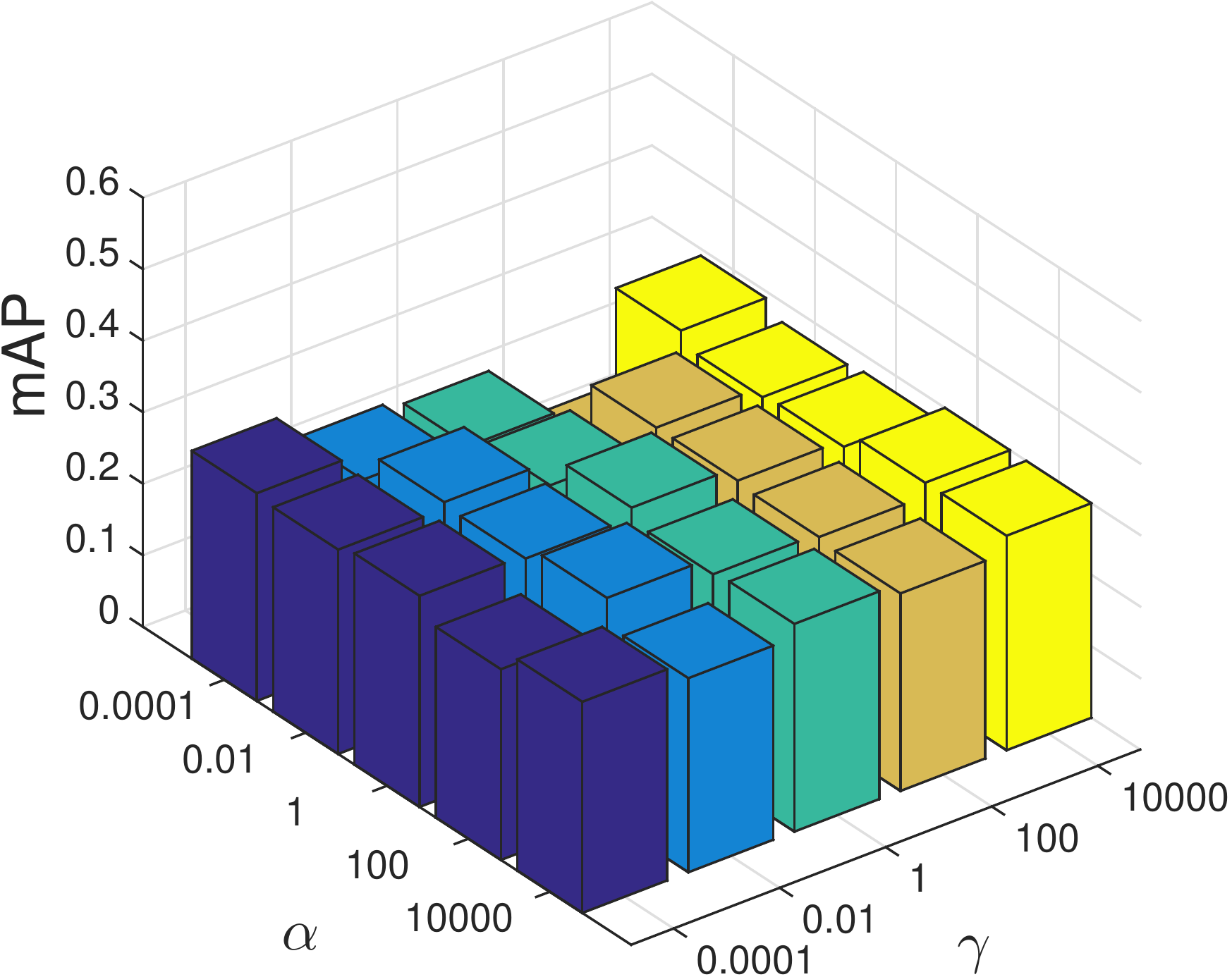}}
\subfigure[$\gamma$ is fixed to $10^3$]{
\includegraphics[width=45mm]{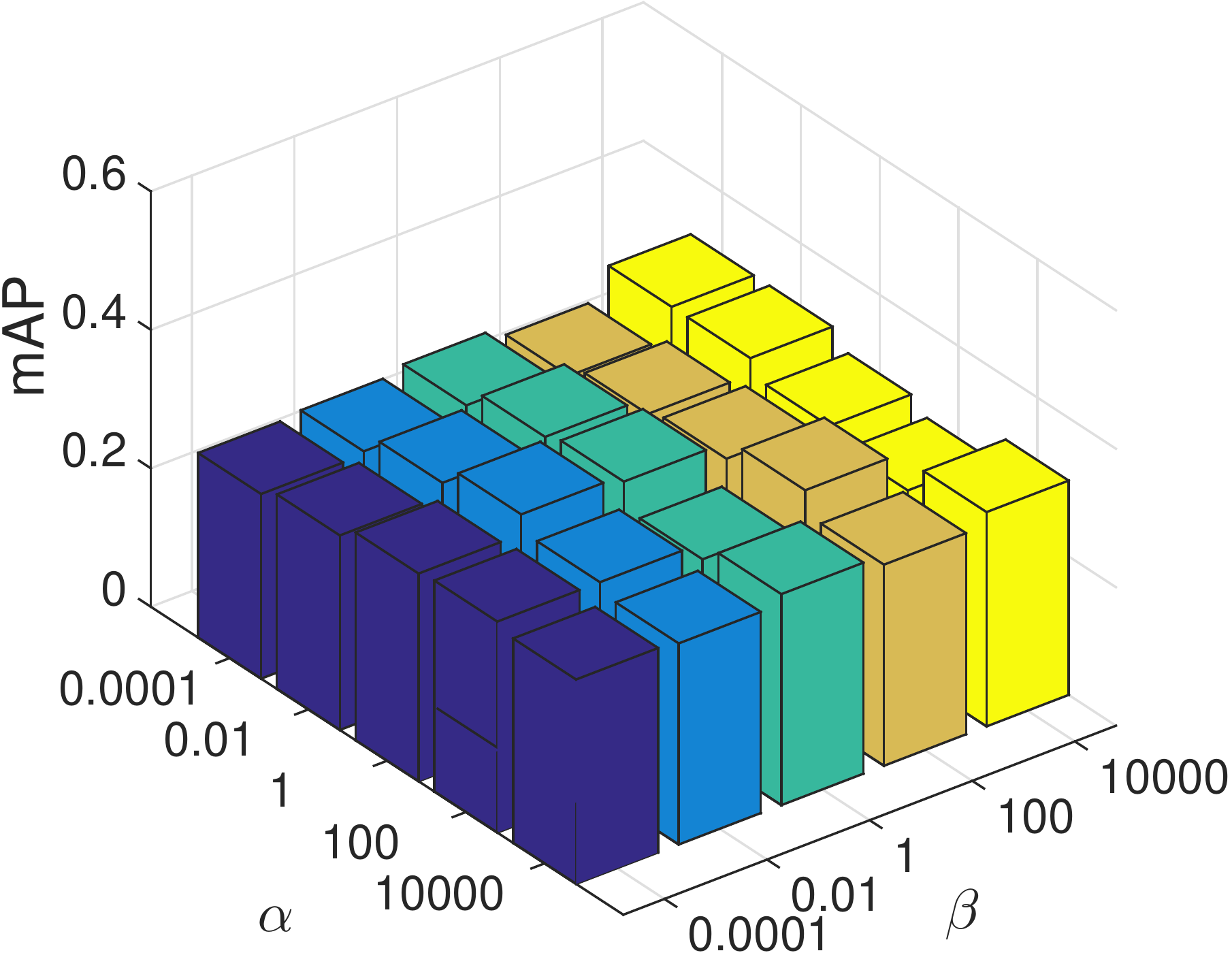}}
\caption{Clustering accuracy variations with parameters $\alpha,\beta,\gamma$ in Eq.(\ref{eq:overall}) on MSRC-V1.}
\label{robustness}
\end{figure*}
\begin{figure*}
\centering                     
\subfigure[MSRC-V1]{
\includegraphics[width=68mm]{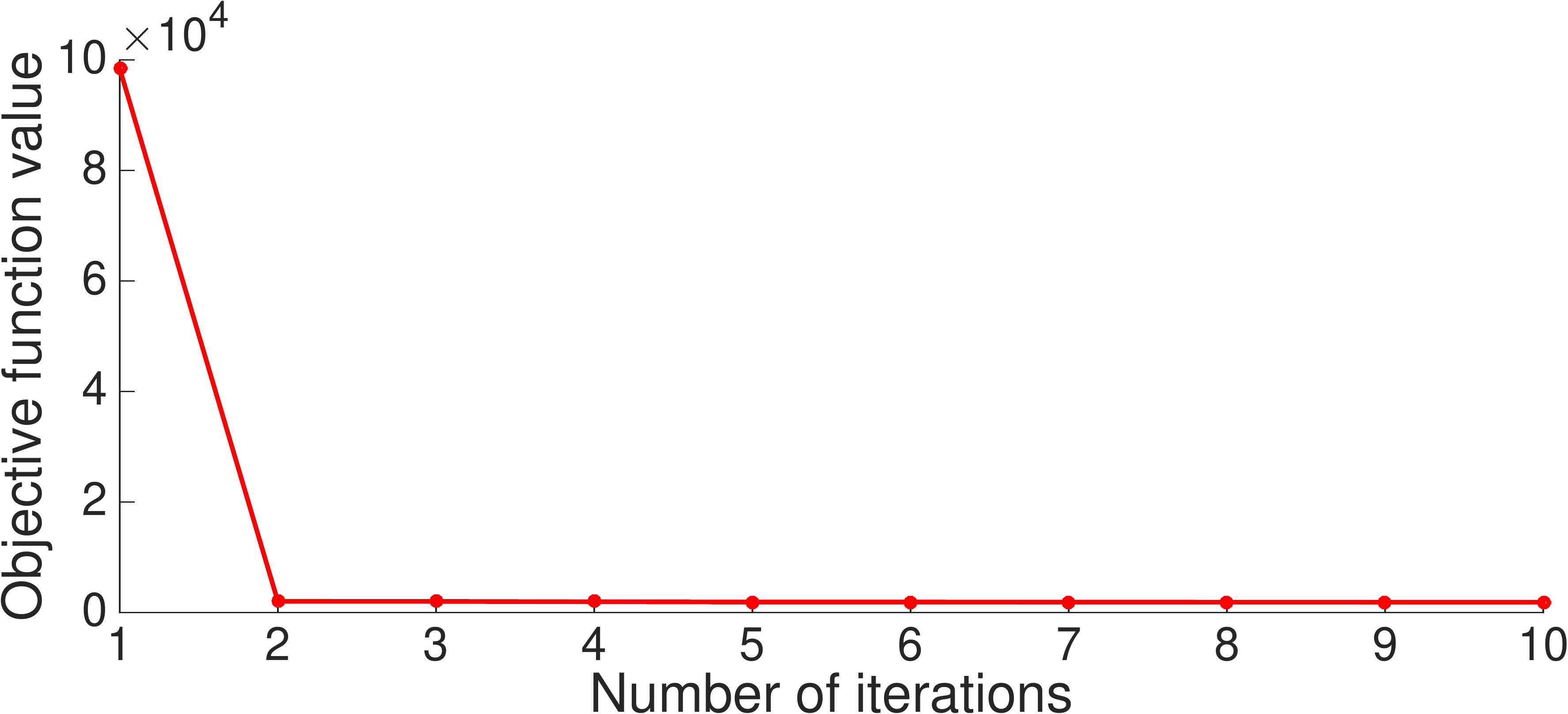}}
\subfigure[Handwritten Numeral]{
\includegraphics[width=68mm]{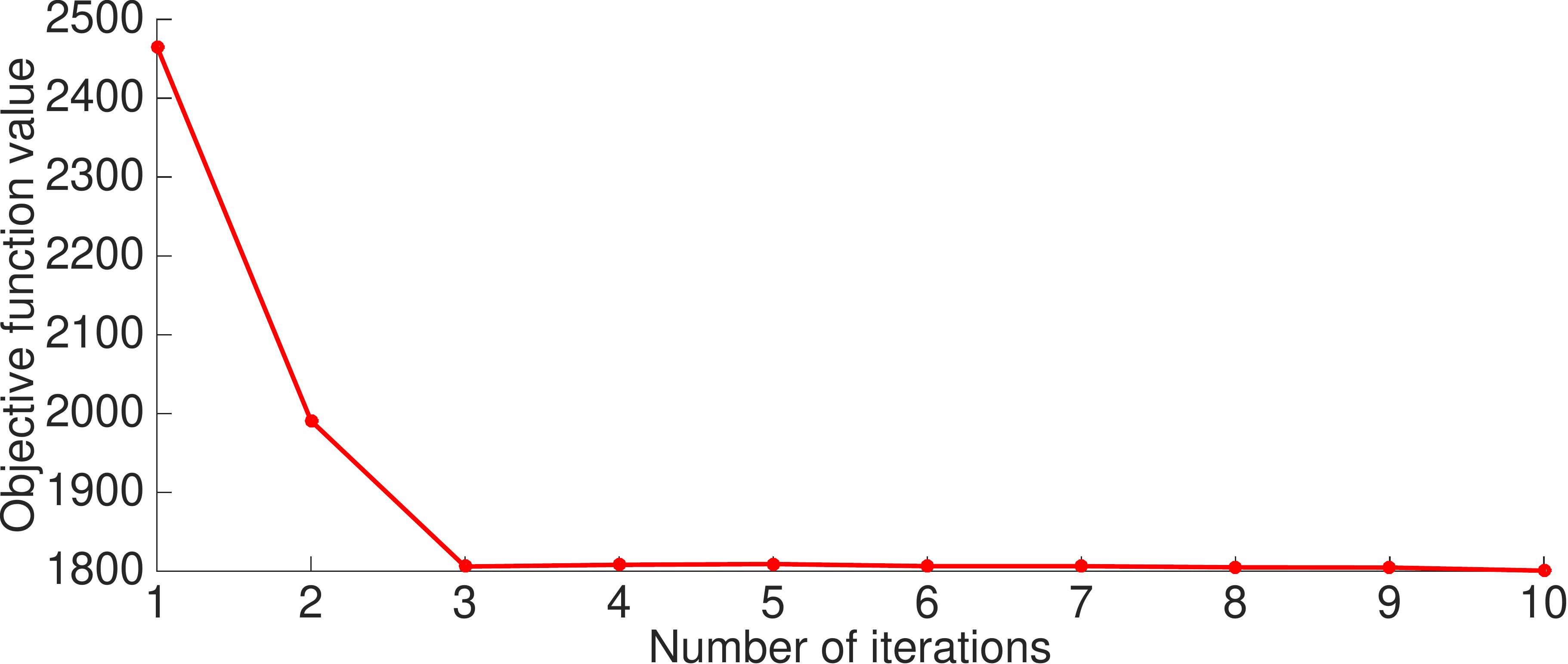}}
\caption{Variations of the objective function value in Eq.(\ref{eq:overall}) with the number of iterations on MSRC-V1 and Handwritten Numeral.}
\label{conver}
\end{figure*}
\subsection{Comparison Results}
The comparison results measured by ACC and NMI are reported in Table \ref{Tab1} and Table \ref{Tab2}, respectively. For these metrics, the higher value indicates the better feature selection performance. Each metric penalizes or favors different properties in feature selection. Hence, we report results on these diverse measures to perform a comprehensive evaluation. The obtained results demonstrate that ACSL can achieve superior or at least comparable performance than the compared approaches. The promising performance of ACSL is attributed to the reason that the proposed collaborative similarity structure learning with proper neighbor assignment could positively facilitate the ultimate multi-view feature selection.



\subsection{Parameter and Convergence Experiment}
We investigate the impact of parameters $\alpha$, $\beta$ and $\gamma$ in Eq.(\ref{eq:overall}) on the performance of ACSL. Specifically, we vary one parameter by fixing the others. Figure \ref{robustness} presents the main results on \textbf{MSRC-V1}.  The obtained results clearly show that ACSL is robust to the involved three parameters.
Figure \ref{conver} records the variations of the objective function value in Eq.(\ref{eq:overall}) with the number of iterations on \textbf{MSRC-V1} and \textbf{Handwritten Numeral}. We can easily observe that the convergence curves become stable within about 5 iterations. The fast convergence ensures the optimization efficiency of ACSL.

\section{Conclusion}
\label{sec:conclu}
In this paper, we propose an adaptive collaborative similarity structure learning for multi-view feature selection. Different from existing approaches, we integrate collaborative similarity learning and feature selection into a unified framework. The collaborative similarity structure with the ideal neighbor assignment and similarity combination weights are adaptively learned to positively facilitate the subsequent feature selection. Simultaneously, the feature selection can supervise the similarity learning process to dynamically construct the desirable similarity structure. Experiments show the superiority of the proposed approach.


\small
\bibliographystyle{named}
\bibliography{ijcai18}

\end{document}